\documentclass{article}

\usepackage[utf8]{inputenc}


\usepackage{fixltx2e}

\usepackage{amsmath,amssymb,amsthm}

\usepackage{cite}

\usepackage{nameref,hyperref}

\usepackage{authblk}

\usepackage{placeins}

\usepackage{graphicx}
\usepackage{caption}
\usepackage{subcaption}

\newtheorem{theorem}{Theorem}
\newtheorem{lemma}[theorem]{Lemma}

\newcommand{\low}[1]{\tilde{{#1}}}
\newcommand{\up}[1]{\bar{{#1}}}
\newcommand{\mysum}[2]{\Sigma^{{#1}}_{#2}}

\usepackage[ruled]{algorithm2e}

\begin{document}
	
\title{An Optimal Realization Algorithm for Bipartite Graphs with Degrees in Prescribed Intervals\footnote{Submitted to the Journal of Discrete Algorithms.}}
\author{Steffen Rechner}
\affil{Institute of Computer Science\\Martin Luther University Halle-Wittenberg, Germany}
\affil{\small{\texttt{steffen.rechner@informatik.uni-halle.de}}}
\date{\today}

\maketitle

\begin{abstract}
We consider the problem of constructing a bipartite graph whose degrees lie in prescribed intervals. Necessary and sufficient conditions for the existence of such graphs are well-known. However, existing realization algorithms suffer from large running times. In this paper, we present a realization algorithm that constructs an appropriate bipartite graph~$G=(U,V,E)$ in~$\mathcal{O}(|U| + |V| + |E|)$ time, which is asymptotically optimal.
In addition, we show that our algorithm produces edge-minimal bipartite graphs and that it can easily be modified to construct edge-maximal graphs.
\end{abstract}

\section{Introduction}

The construction of bipartite graphs with prescribed degrees is a well-studied algorithmic problem with various applications in science.
Whereas the classical problem asks for a bipartite graph whose degrees must \emph{exactly} match prescribed integers, we study the problem of constructing a bipartite graph whose degrees have to lie in prescribed \emph{intervals}.
As one application, the construction of such graphs is part of several \emph{sampling algorithms} that aim at producing random bipartite graphs whose degrees lie in the prescribed intervals. Such sampling algorithms typically work by randomly modifying an existing bipartite graph, while preserving the prescribed bounds on the degrees~\cite{rechner2017interval}.
If no bipartite graph with prescribed degrees is given, the first step of such algorithms is to construct an initial bipartite graph.
In such cases, the efficient realization is of high importance for the efficiency of the whole sampling algorithm.


\paragraph{Problem Definition}

Given a four-tuple of integer vectors~$(\low{\mathbf{r}}, \up{\mathbf{r}}, \low{\mathbf{c}}, \up{\mathbf{c}})$ with~$\low{\mathbf{r}} = (\low{r}_1, \low{r}_2, \ldots, \low{r}_{n_1})$,~$\up{\mathbf{r}} = (\up{r}_1, \up{r}_2, \ldots, \up{r}_{n_1})$,~$\low{\mathbf{c}}=(\low{c}_1, \low{c}_2, \ldots, \low{c}_{n_2})$,~$\up{\mathbf{c}} = (\up{c}_1, \up{c}_2, \ldots, \up{c}_{n_2})$, we want to construct a bipartite graph whose vertex degrees are bounded from below by the vectors~$\low{\mathbf{r}}$ and~$\low{\mathbf{c}}$ and from above by~$\up{\mathbf{r}}$ and~$\up{\mathbf{c}}$.
More precisely, let~$d_G \colon U \cup V \to \mathbb{N}$ describe the vertex degrees of a bipartite graph~$G=(U,V,E)$ with disjoint vertex sets~$U=\{ u_1, u_2, \ldots, u_{n_1} \}~$ and~$V = \{ v_1, v_2, \ldots, v_{n_2} \}$. 
We seek a bipartite graph
such that~$\low{r}_i \leq d_G(u_i) \leq \up{r}_i$ and~$\low{c}_j \leq d_G(v_j) \leq \up{c}_j$ hold for all~$i$ in range~$1 \leq i \leq n_1$ and~$j$ in range~$1 \leq j \leq n_2$. 

\paragraph{Related Work}

In the special case of~$\low{\mathbf{r}} = \up{\mathbf{r}}$ and~$\low{\mathbf{c}} = \up{\mathbf{c}}$, Ryser's algorithm~\cite{ryser1957combinatorial} is used for decades to construct a valid bipartite graph~$G=(U,V,E)$ in~$\mathcal{O}(|U| + |V| + |E|)$ time.
In the general case, the problem can be solved by finding a~$(g,f)$-factor in the complete bipartite graph with~$g$ and~$f$ being integer vectors obtained by concatenating the vectors of lower and upper bounds.
As~$(g,f)$-factors in arbitrary graphs~$\mathcal{G} = (\mathcal{V}, \mathcal{E})$ can be constructed in~$\mathcal{O}(|\mathcal{V}|^3)$ time~\cite{anstee1985algorithmic}, this approach leads to a realization algorithm with a running time of~$\mathcal{O}((|U|+|V|)^3)$.
Another way of finding a valid bipartite graph is to construct an appropriate flow network~$\mathcal{F} = (\mathcal{V}, \mathcal{A})$ and to calculate a maximal flow~\cite{fulkerson1959network}. 
As calculating a maximal flow can be achieved in~$\mathcal{O}(|\mathcal{V}|\cdot |\mathcal{A}|)$ time~\cite{orlin2013max} and, in our case, the flow network needs to have~$|\mathcal{V}| = |U|+|V|+2$ vertices and~$|\mathcal{A}| =  |U| \cdot |V| + |U| + |V|$ arcs, we gain a running time of~$\mathcal{O}(|U| \cdot |V| \cdot (|U|+|V|))$.

\paragraph{Contribution}
In this article, we present a realization algorithm whose running time is bounded by~$\mathcal{O}(|U| + |V| + |E|)$ with~$|E|$ being the number of edges of the bipartite graph that is been constructed. 
After summarizing the necessary theoretical background, we present the realization algorithm and show that its running time is~$\mathcal{O}(|U| + |V| + |E|)$. Finally, we show that the bipartite graph constructed by our method is edge-minimal.

\section{Preliminaries}

\paragraph{Definitions}
	Let~$n_1$ and~$n_2$ be positive integers. Let~$\mathbf{r} = (r_1, r_2, \ldots, r_{n_1})$,~$\bar{\mathbf{r}} = (\bar{r}_1, \bar{r}_2, \ldots, \bar{r}_{n_1})$, and~$\mathbf{c} = (c_1, c_2, \ldots, c_{n_2})$ be integer vectors of length~$n_1$ respectively~$n_2$.
	The integer vector~$\mathbf{r}$ is called \emph{non-increasing} if~$r_i \geq r_{i+1}$ for~$1 \leq i < n_1$.
	We say that the pair of integer vectors~$(\mathbf{r}, \mathbf{c})$ is \emph{bi-graphical} if and only if there is a bipartite graph~$G=(U,V,E)$ such that~$d_G(u_i) = r_i$ and~$d_G(v_j) = c_j$ for~$1\leq i \leq n_1$ and~$1 \leq j \leq n_2$. 
	We will write~$\mathbf{r} \leq \bar{\mathbf{r}}$ if and only if~$r_i \leq \bar{r}_i$ for all~$i$ in range~$1 \leq i \leq n_1$.
	The integer vector~$\mathbf{c}' = (c_1', c_2', \ldots, c_{n_1}')$ of length~$n_1$ is called \emph{conjugate} vector of~$\mathbf{c}$ if and only if~$c_i' = |\{ j \colon 1 \leq j \leq n_2 \wedge c_j \geq i  \}|$ for each~$i$ in range~$1\leq i \leq n_1$.
	We abbreviate the sum~$\sum_{i=1}^k r_i$ by~$\mysum{k}{\mathbf{r}}$. We say that the conjugate vector~$\mathbf{c}'$ \emph{dominates}~$\mathbf{r}$ and write~$\mathbf{r} \trianglelefteq \mathbf{c}'$ if and only if~$\mysum{k}{\mathbf{r}} \leq \mysum{k}{\mathbf{c}'}$ holds for~$1 \leq k \leq n_1$.
	In addition, we define~$\mysum{}{\mathbf{r}}$ to be the vector of \emph{partial sums} of an integer vector~$\mathbf{r}$, thus~$\mysum{}{\mathbf{r}} = (\mysum{1}{\mathbf{r}}, \mysum{2}{\mathbf{r}}, \ldots, \mysum{n_1}{\mathbf{r}})$.
	Finally, we say that the four-tuple of integer vectors~$(\low{\mathbf{r}}, \up{\mathbf{r}}, \low{\mathbf{c}}, \up{\mathbf{c}})$ is \emph{realizable} if and only if there is a bipartite graph~$G=(U,V,E)$ such that~$\low{r}_i \leq d_G(u_i) \leq \up{r}_i$ and~$\low{c}_j \leq d_G(v_j) \leq \up{c}_j$ hold for~$1 \leq i \leq n_1$ and~$1 \leq j \leq n_2$. 
	In such cases, we will call~$G$ a \emph{realization} of the four-tuple~$(\low{\mathbf{r}}, \up{\mathbf{r}}, \low{\mathbf{c}}, \up{\mathbf{c}})$.\\

Our algorithm is based on the following well-known theorems.

\begin{theorem}[Gale~\cite{gale1957theorem}, Ryser\cite{ ryser1957combinatorial}]
	Let~$\mathbf{r} = (r_1, r_2, \ldots, r_{n_1})$ be a non\--in\-creas\-ing integer vector and let~$\mathbf{c} = (c_1, c_2, \ldots, c_{n_2})$ be an integer vector.
	The pair~$(\mathbf{r}, \mathbf{c})$ is bi-graphical if and only if~$\mysum{n_1}{\mathbf{r}} = \mysum{n_2}{\mathbf{c}}$ and~$\mathbf{r} \trianglelefteq \mathbf{c}'$.
	\label{Theo:GaleRyser}
\end{theorem}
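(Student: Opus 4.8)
The plan is to treat the two directions separately: necessity is a short double-counting argument, while sufficiency requires an explicit inductive construction whose correctness hinges on a partial-sum estimate. For necessity, suppose a realization $G=(U,V,E)$ exists. Counting $|E|$ once from each side gives $\mysum{n_1}{\mathbf{r}} = |E| = \mysum{n_2}{\mathbf{c}}$. For the domination inequalities I would first record the identity $\mysum{k}{\mathbf{c}'} = \sum_{j=1}^{n_2} \min(c_j, k)$, obtained by swapping the order of summation in the definition of the conjugate vector (equivalently, by counting the cells in the first $k$ rows of the Ferrers diagram of $\mathbf{c}$). The right-hand side is the largest possible number of edges that $G$ can place between $V$ and any fixed set of $k$ vertices of $U$, since each $v_j$ sends at most $\min(c_j,k)$ of its $c_j$ edges into such a set. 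Because $\mathbf{r}$ is non-increasing, applying this bound to $S=\{u_1,\dots,u_k\}$ yields $\mysum{k}{\mathbf{r}} = \sum_{i\le k} d_G(u_i) \le \sum_{j}\min(c_j,k) = \mysum{k}{\mathbf{c}'}$ for every $k$, which is exactly $\mathbf{r}\trianglelefteq\mathbf{c}'$.

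For sufficiency I would construct a realization by induction on $n_2$. Assuming the two conditions hold, I would realize one column, say $v_{n_2}$, by joining it to the $c_{n_2}$ vertices of largest demand $u_1,\dots,u_{c_{n_2}}$ (well defined since $\mathbf{r}$ is non-increasing), then form the residual demand vector by subtracting one from each of $r_1,\dots,r_{c_{n_2}}$, re-sort it into a non-increasing vector $\mathbf{r}^\ast$, and apply the induction hypothesis to the instance $(\mathbf{r}^\ast, (c_1,\dots,c_{n_2-1}))$. Stitching the edge set returned for the smaller instance together with the $c_{n_2}$ new edges gives the desired graph, and the base case $n_2 = 1$ is immediate. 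The sum condition survives trivially, since both total degrees drop by $c_{n_2}$.

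The crux — and the step I expect to be hardest — is showing that the domination condition is inherited by the reduced instance, i.e. that $\mathbf{r}^\ast$ is dominated by the conjugate of $(c_1,\dots,c_{n_2-1})$. Deleting $v_{n_2}$ lowers the conjugate vector by exactly one in each of its first $c_{n_2}$ coordinates, so the target partial sums decrease by $\min(k,c_{n_2})$. On the $\mathbf{r}$ side, subtracting one from the top $c_{n_2}$ entries would lower $\mysum{k}{\mathbf{r}}$ by the same $\min(k,c_{n_2})$, but the subsequent re-sorting interferes: when the decrement breaks a tie straddling position $c_{n_2}$, re-sorting can keep a partial sum of $\mathbf{r}^\ast$ higher than this naive value. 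I would therefore verify $\mysum{k}{\mathbf{r}^\ast} \le \mysum{k}{(c_1,\dots,c_{n_2-1})'}$ by bounding $\mysum{k}{\mathbf{r}^\ast}$ against both $\mysum{k}{\mathbf{r}}$ and $\mysum{k}{\mathbf{r}}-\min(k,c_{n_2})$, and playing the re-sorting gain off against the slack already present in the original inequality $\mysum{k}{\mathbf{r}}\le\mysum{k}{\mathbf{c}'}$, distinguishing the indices lying inside the decremented block from those beyond it. This partial-sum bookkeeping is the technical heart of the argument; once it is settled, the induction closes.
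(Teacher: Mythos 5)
The paper does not prove this statement at all: Theorem~\ref{Theo:GaleRyser} is quoted as a classical result of Gale and Ryser and used as a black box, so there is no internal proof to compare against. Judged on its own terms, your outline is the standard constructive proof and is essentially sound. The necessity direction is complete: the identity $\mysum{k}{\mathbf{c}'}=\sum_j\min(c_j,k)$ and the observation that a simple bipartite graph places at most $\min(c_j,k)$ edges from $v_j$ into any $k$-subset of $U$ give exactly $\mathbf{r}\trianglelefteq\mathbf{c}'$. The sufficiency direction is Ryser's greedy column-filling, and you have correctly located the crux in the inheritance of the dominance condition by the reduced instance. Two small well-definedness points are worth recording explicitly, since both follow from the hypotheses rather than being automatic: dominance at $k=n_1$ combined with $\mysum{n_1}{\mathbf{r}}=\mysum{n_2}{\mathbf{c}}$ forces $c_j\leq n_1$ for every $j$ (so the $c_{n_2}$ largest-demand vertices exist), and dominance at $k=|\{i\colon r_i>0\}|$ forces at least $c_{n_2}$ of the residual demands to be positive (so no entry of $\mathbf{r}^\ast$ goes negative). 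The inheritance step itself is not yet a proof, only a plan; but the plan is the right one. The clean way to execute it is to realize the decrement so that no re-sorting is needed: letting $v=r_{c_{n_2}}$, decrement all entries exceeding $v$ and then the \emph{rightmost} entries equal to $v$, which keeps the vector non-increasing and reduces the verification to showing that the deficit $\min(k,c_{n_2})-|D\cap\{1,\dots,k\}|$ (with $D$ the decremented index set) never exceeds the slack $\mysum{k}{\mathbf{c}'}-\mysum{k}{\mathbf{r}}$; this is exactly the case analysis in Krause's short proof of the Gale--Ryser theorem and it does close, though the inequality is frequently tight, so the bookkeeping cannot be waved away. In short: correct strategy, complete necessity argument, and a sufficiency argument that is a faithful skeleton of a known proof with its hardest lemma still to be written out.
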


\begin{theorem}[Fulkerson~\cite{fulkerson1959network}, Schocker~\cite{schocker2001graphs}]
	Let~$\low{\mathbf{r}} = (\low{r}_1, \low{r}_2, \ldots, \low{r}_{n_1})$ and~$\low{\mathbf{c}} = (\low{c}_1, \low{c}_1, \ldots, \low{c}_{n_2})$ be non-increasing integer vectors, and let~$\up{\mathbf{r}} = (\up{r}_1, \up{r}_2, \ldots, \up{r}_{n_1})$ and~$\up{\mathbf{c}} = (\up{c}_1, \up{c}_2, \ldots, \up{c}_{n_2})$ be integer vectors with~$\low{\mathbf{r}} \leq \up{\mathbf{r}}$ and~$\low{\mathbf{c}} \leq \up{\mathbf{c}}$.
	The four-
	tuple~$(\low{\mathbf{r}}, \up{\mathbf{r}}, \low{\mathbf{c}}, \up{\mathbf{c}})$ is realizable if and only if~$\low{\mathbf{r}} \trianglelefteq \up{\mathbf{c}}'$ and~$\low{\mathbf{c}} \trianglelefteq \up{\mathbf{r}}'$.
	\label{Theo:Schocker}
\end{theorem}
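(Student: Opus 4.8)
The plan is to prove the two implications separately. Necessity is the easy direction and follows from a direct double-counting argument; sufficiency is the substantial part, and I would reduce it to the Gale--Ryser theorem (Theorem~\ref{Theo:GaleRyser}) by exhibiting an exact, bi-graphical degree pair that is sandwiched between the prescribed bounds.

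For necessity, suppose a realization $G=(U,V,E)$ exists. I would first record the identity $\mysum{k}{\mathbf{c}'} = \sum_{j=1}^{n_2} \min(c_j, k)$, which follows immediately from the definition of the conjugate vector. Now fix $k$ and consider the vertices $u_1, \dots, u_k$; since $\low{\mathbf{r}}$ is non-increasing, these carry the $k$ largest lower row bounds. The number of edges incident to this set is $\sum_{i=1}^k d_G(u_i) \ge \mysum{k}{\low{\mathbf{r}}}$. On the other hand, each vertex $v_j$ receives at most $\min(k, d_G(v_j)) \le \min(k, \up{c}_j)$ of these edges: at most $k$ because the set has $k$ vertices and $G$ is simple, and at most $d_G(v_j)$ by definition of degree. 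Summing over $j$ bounds the edge count above by $\sum_j \min(k, \up{c}_j) = \mysum{k}{\up{\mathbf{c}}'}$, so $\mysum{k}{\low{\mathbf{r}}} \le \mysum{k}{\up{\mathbf{c}}'}$ for every $k$, i.e.\ $\low{\mathbf{r}} \trianglelefteq \up{\mathbf{c}}'$. The condition $\low{\mathbf{c}} \trianglelefteq \up{\mathbf{r}}'$ follows by the symmetric argument with the roles of rows and columns exchanged (using that $\low{\mathbf{c}}$ is non-increasing).

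For sufficiency, assume $\low{\mathbf{r}} \trianglelefteq \up{\mathbf{c}}'$ and $\low{\mathbf{c}} \trianglelefteq \up{\mathbf{r}}'$. Evaluating these at $k = n_1$ and $k = n_2$ respectively (and using $\up{c}_j \le n_1$, $\up{r}_i \le n_2$, which may be assumed for a simple graph and do not change the conjugates) yields $\mysum{n_1}{\low{\mathbf{r}}} \le \mysum{n_2}{\up{\mathbf{c}}}$ and $\mysum{n_2}{\low{\mathbf{c}}} \le \mysum{n_1}{\up{\mathbf{r}}}$; together with the trivial bounds from $\low{\mathbf{r}} \le \up{\mathbf{r}}$ and $\low{\mathbf{c}} \le \up{\mathbf{c}}$ this guarantees an integer $m$ with $\max(\mysum{n_1}{\low{\mathbf{r}}}, \mysum{n_2}{\low{\mathbf{c}}}) \le m \le \min(\mysum{n_1}{\up{\mathbf{r}}}, \mysum{n_2}{\up{\mathbf{c}}})$. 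The goal is then to select $\mathbf{r}^\ast$ with $\low{\mathbf{r}} \le \mathbf{r}^\ast \le \up{\mathbf{r}}$ (non-increasing) and $\mathbf{c}^\ast$ with $\low{\mathbf{c}} \le \mathbf{c}^\ast \le \up{\mathbf{c}}$, both summing to $m$, such that $\mathbf{r}^\ast \trianglelefteq (\mathbf{c}^\ast)'$; Theorem~\ref{Theo:GaleRyser} then produces a graph realizing $(\mathbf{r}^\ast, \mathbf{c}^\ast)$, which is the desired realization. I would build this pair starting from the extreme configuration $\mathbf{r}^\ast = \low{\mathbf{r}}$, $\mathbf{c}^\ast = \up{\mathbf{c}}$, which already satisfies $\mathbf{r}^\ast \trianglelefteq (\mathbf{c}^\ast)'$ by hypothesis but has mismatched sums, and then balancing the two sums to the common value $m$ by lowering the largest column entries and, if necessary, raising row entries.

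The main obstacle is to perform this balancing while simultaneously respecting all four box constraints and never destroying the domination $\mathbf{r}^\ast \trianglelefteq (\mathbf{c}^\ast)'$. Lowering a column entry can only decrease the conjugate and raising a row entry can only increase a partial sum, so both moves work against domination; the reason to reduce the largest columns first is that decreasing a value that still exceeds $k$ leaves $\min(c^\ast_j, k)$, and hence $\mysum{k}{(\mathbf{c}^\ast)'}$, unchanged at the relevant small indices $k$. I expect the delicate step to be showing that this greedy adjustment always reaches total $m$ without forcing any column below $\low{c}_j$, and that the second hypothesis $\low{\mathbf{c}} \trianglelefteq \up{\mathbf{r}}'$ is exactly what certifies the remaining domination once the sums are equalized. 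A clean way to discharge this bookkeeping, and an alternative to the explicit exchange argument, is to phrase the balancing step as a feasibility (max-flow, Hall-type deficiency) problem on the associated bipartite structure, where the two domination hypotheses reappear as the non-negativity of every cut.
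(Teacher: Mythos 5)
First, a point of comparison: the paper does not prove Theorem~\ref{Theo:Schocker} at all --- it is imported from Fulkerson and Schocker --- so your attempt is measured against the statement itself rather than an in-paper argument. Your necessity direction is correct and complete: the identity $\mysum{k}{\up{\mathbf{c}}'}=\sum_{j}\min(\up{c}_j,k)$ plus double counting of the edges incident to $\{u_1,\dots,u_k\}$ gives $\low{\mathbf{r}} \trianglelefteq \up{\mathbf{c}}'$, and the symmetric argument gives $\low{\mathbf{c}} \trianglelefteq \up{\mathbf{r}}'$.

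The sufficiency direction has a genuine gap, located exactly where you write ``I expect the delicate step to be\dots''. Two concrete problems. (1) The target sum $m$ is under-determined: you show the interval $\bigl[\max(\mysum{n_1}{\low{\mathbf{r}}},\mysum{n_2}{\low{\mathbf{c}}}),\,\min(\mysum{n_1}{\up{\mathbf{r}}},\mysum{n_2}{\up{\mathbf{c}}})\bigr]$ is nonempty, but not every $m$ in it admits an admissible pair with equal sums and $\mathbf{r}^\ast \trianglelefteq (\mathbf{c}^\ast)'$. The paper's own running example refutes the left endpoint: for $\low{\mathbf{r}}=(4,1,0)$, $\up{\mathbf{r}}=(4,2,3)$, $\low{\mathbf{c}}=(2,2,0,0,0)$, $\up{\mathbf{c}}=(2,3,1,2,2)$ the interval is $[5,9]$, yet $m=5$ forces $\mathbf{r}^\ast=(4,1,0)$ while every admissible $\mathbf{c}^\ast$ of sum $5$ has $(\mathbf{c}^\ast)'_1\le 3<4$, so domination fails at $k=1$; the true minimum is $\mysum{n_2}{\low{\mathbf{c}}}+\delta_1=6$. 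Identifying a feasible $m$ therefore already requires the max-type quantity $\delta_1=\max_j(\mysum{j}{\low{\mathbf{r}}}-\mysum{j}{\low{\mathbf{c}}'})$, i.e.\ precisely the analysis you defer. (2) Even with a feasible $m$, the claims that the greedy lowering/raising ``never destroys the domination'' and that $\low{\mathbf{c}} \trianglelefteq \up{\mathbf{r}}'$ ``certifies the remaining domination'' are asserted, not proved; establishing such an invariant is the entire content of the theorem (it is what the paper's Lemmas~\ref{Theo:Incrementation}--\ref{Theo:Termination2} labor to prove for the analogous raising procedure, and those cannot be borrowed here, since their proofs invoke Theorem~\ref{Theo:Schocker} itself). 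The closing appeal to a max-flow formulation is the historically correct route, but reducing the exponentially many cut conditions to the two stated dominance inequalities is the substance of Fulkerson's proof, not bookkeeping. As written, the proposal proves one implication and only outlines a plan for the other.
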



\section{Realization Algorithm}

Our algorithm assumes that the integer vectors~$\low{\mathbf{r}}$ and~$\low{\mathbf{c}}$ are ordered non\--in\-creas\-ingly, which can be easily arranged by descendingly sorting the pairs~$(\low{\mathbf{r}} ,\up{\mathbf{r}})$ and~$(\low{\mathbf{c}} ,\up{\mathbf{c}})$ by their associated lower bounds.
In addition, the algorithm assumes that the four-tuple~$(\low{\mathbf{r}}, \up{\mathbf{r}}, \low{\mathbf{c}}, \up{\mathbf{c}})$ is realizable, which can be verified by Theorem~\ref{Theo:Schocker}. 
The key idea is now to iteratively construct a bi-graphical pair of integer vectors~$(\mathbf{r}, \mathbf{c})$ bounded by~$\low{\mathbf{r}} \leq \mathbf{r} \leq \up{\mathbf{r}}$ and~$\low{\mathbf{c}} \leq \mathbf{c} \leq \up{\mathbf{c}}$ that is afterwards realized via Ryser's algorithm~\cite{gale1957theorem}. The algorithm is divided into two parts. 


\subsection{Phase One}

Since the four-tuple~$(\low{\mathbf{r}}, \up{\mathbf{r}}, \low{\mathbf{c}}, \up{\mathbf{c}})$ is realizable by assumption, there is a bipartite graph~$G=(U,V,E)$ such that~$\low{c}_j \leq d_G(v_j) \leq \up{c}_j$ holds for each~$v_j \in V$. As a consequence, there must be an
integer vector~$\mathbf{c} = ( c_1, c_2, \ldots, c_{n_2})$ which describes the degrees of vertex set~$V$.
With other words, there is an integer vector~$\mathbf{c}$ bounded by~$\low{\mathbf{c}} \leq \mathbf{c} \leq \up{\mathbf{c}}$ such that~$(\low{\mathbf{r}}, \up{\mathbf{r}}, \mathbf{c}, \mathbf{c})$ is realizable.
In its first phase~(see Alg.~\ref{Alg:PhaseOne}), the algorithm constructs such a vector~$\mathbf{c}$.
For this purpose,~$\mathbf{c}$ is initialized with~$\low{\mathbf{c}}$. In a series of iterations, the algorithm identifies the right-most component~$c_i$ with~$c_i < \up{c}_i$ and increments the left-most components~$c_j$ with~$c_j = c_i$.
After a well-chosen number~$\delta_1$ of iterations, the algorithm returns the realizable four-tuple~$(\low{\mathbf{r}}, \up{\mathbf{r}}, \mathbf{c}, \mathbf{c})$.

\begin{algorithm}
	\LinesNumbered
	\DontPrintSemicolon
	\KwIn{realizable four-tuple~$(\low{\mathbf{r}}, \up{\mathbf{r}}, \low{\mathbf{c}}, \up{\mathbf{c}})$}
	\KwOut{realizable four-tuple~$(\low{\mathbf{r}}, \up{\mathbf{r}}, \mathbf{c}, \mathbf{c})$ with~$\low{\mathbf{c}} \leq \mathbf{c} \leq \up{\mathbf{c}}$}
	\caption{\textsc{Phase One}\label{Alg:PhaseOne}}
	$\mathbf{c} \gets \low{\mathbf{c}}$\tcp*{initialize~$\mathbf{c}$ with lower bounds~$\low{\mathbf{c}}$}
	$\delta_1 \gets \max\{ \mysum{j}{\low{\mathbf{r}}} - \mysum{j}{\low{\mathbf{c}}'} \colon 1 \leq j \leq n_1 \}$\tcp*{calculate number of steps}
	$i \gets n_2$\tcp*{right-most position such that~$c_i < \up{c}_i$}
	\For{$k=1,2,\ldots, \delta_1$}{
		\While(\tcp*[f]{proceed to next position with~$c_i < \up{c}_i$}){$c_i = \up{c}_i$}{
			$i \gets i - 1$\;
		}
		$j \gets \min \{ \ell \colon c_{\ell} = c_i\}$ \tcp*{identify left-most~$c_j$ with~$c_j = c_i$}
		swap~$\up{c}_i$ and~$\up{c}_j$\;
		$c_j \gets c_j + 1$\;
	}
	\Return{$(\low{\mathbf{r}}, \up{\mathbf{r}}, \mathbf{c}, \mathbf{c})$}\;
\end{algorithm}	

\paragraph{Example}

Consider the following integer vectors:
\begin{alignat*}{2}
\low{\mathbf{r}} &= (4,1,0) \quad &\low{\mathbf{c}} &= (2,2,0,0,0)\\
\up{\mathbf{r}}  &= (4,2,3) \quad &\up{\mathbf{c}}  &= (2,3,1,2,2).
\end{alignat*}
By setting up at the corresponding conjugate vectors, we observe that the four-tuple~$(\low{\mathbf{r}}, \up{\mathbf{r}}, \low{\mathbf{c}}, \up{\mathbf{c}})$ is realizable via Theorem~\ref{Theo:Schocker}. 
\begin{alignat*}{5}
\low{\mathbf{r}}  &= (4,1,0) \quad
&\up{\mathbf{r}}' &= (3,3,2,1,0) \quad
&\low{\mathbf{c}} &= (2,2,0,0,0) \quad
&\up{\mathbf{c}}' &= (5,4,1)
\\
\mysum{}{\low{\mathbf{r}}} &= (4,5,5) \quad
&\mysum{}{\up{\mathbf{r}}'} &= (3,6,8,9,9) \quad
&\mysum{}{\low{\mathbf{c}}} &= (2,4,4,4,4) \quad
&\mysum{}{\up{\mathbf{c}}'} &= (5,9,10). \quad
\end{alignat*}
The vector~$\mathbf{c}$ is initialized with~$\mathbf{c} \gets \low{\mathbf{c}} = (2,2,0,0,0)$. 
As~$\low{\mathbf{c}}' = (2,2,0)$ and~$\mysum{}{\low{\mathbf{c}}'} = (2,4,4)$, the number of steps of the outer loops is calculated by~$\delta_1 \gets \max\{ 4-2, 5-4, 5-4 \} = 2$.
\begin{enumerate}
	\item The inner loop breaks with~$i=5$. The left-most component equal to~$c_5=0$ is at position~$j=3$. Thus, the algorithm swaps~$\up{c}_5$ and~$\up{c}_3$ and increments~$c_3$. We gain~$\mathbf{c}=(2,2,1,0,0)$ and~$\up{\mathbf{c}}=(2,3,2,2,1)$.
	\item The inner loop breaks again with~$i=5$. The left-most component equal to~$c_5 = 0$ is now at position~$j=4$. Thus, the algorithm swaps~$\up{c}_5$ and~$\up{c}_4$ and increments~$c_4$. We gain~$\mathbf{c}=(2,2,1,1,0)$ and ~$\up{\mathbf{c}}=(2,3,2,1,2)$.
\end{enumerate}
The situation at the end of the first phase:
\begin{alignat*}{4}
\low{\mathbf{r}}  &= (4,1,0) \quad
&\up{\mathbf{r}}  &= (4,2,3) \quad
&\mathbf{c} &= (2,2,1,1,0) \quad
&\mathbf{c}' &= (4,2,0) \\
\mysum{}{\low{\mathbf{r}}} &= (4,5,5) \quad
&\mysum{}{\up{\mathbf{r}}} &= (4,6,9) \quad
&\mysum{}{\mathbf{c}} &= (2,4,5,6,6) \quad
&\mysum{}{\mathbf{c}'} &= (4,6,6).
\end{alignat*}
We observe that~$(\low{\mathbf{r}}, \up{\mathbf{r}}, \mathbf{c}, \mathbf{c})$ is realizable via Theorem~\ref{Theo:Schocker}.

\paragraph{Proof of Correctness}

The correctness of Alg.~\ref{Alg:PhaseOne} follows from Lemma~\ref{Theo:Incrementation} and~\ref{Theo:Termination}.  For the following theorems,
let~$\low{\mathbf{r}}=(\low{r}_1, \low{r}_2, \ldots, \low{r}_{n_1})$ and~$\mathbf{c} = (c_1, c_2, \ldots, c_{n_2})$ be non-increasing integer vectors and let~$\up{\mathbf{r}}=(\up{r}_1, \up{r}_2, \ldots, \up{r}_{n_1})$ and~$\up{\mathbf{c}} = (\up{c}_1, \up{c}_2, \ldots, \up{c}_{n_2})$ be integer vectors such that~$\low{\mathbf{r}} \leq \up{\mathbf{r}}$ and~$\mathbf{c} \leq \up{\mathbf{c}}$ hold,~$(\low{\mathbf{r}}, \up{\mathbf{r}}, \mathbf{c}, \up{\mathbf{c}})$ is realizable, and~$(\low{\mathbf{r}}, \up{\mathbf{r}}, \mathbf{c}, \mathbf{c})$ is not realizable.

\begin{lemma}
	There is a right-most position~$i$ with~$c_i < \up{c}_i$.
	Let~$j$ be the left-most position with~$c_j = c_i$.
	Let~$\mathbf{a} = (a_1, a_2, \ldots, a_{n_2})$ and~$\up{\mathbf{a}} = (\up{a}_1, \up{a}_2, \ldots, \up{a}_{n_2})$ be integer vectors defined by
	\begin{equation*}
	\begin{aligned}[c]
	a_k = \begin{cases}
	c_k + 1 ,& \textnormal{if } k = j,\\
	c_k ,& \textnormal{otherwise},
	\end{cases}
	\end{aligned}
	\qquad\textnormal{and}\qquad
	\begin{aligned}[c]
	\up{a}_k = \begin{cases}
	\up{c}_i ,& \textnormal{if } k = j,\\
	\up{c}_j ,& \textnormal{if } k = i,\\
	\up{c}_k ,& \textnormal{otherwise}.
	\end{cases}
	\end{aligned}
	\end{equation*}
	Then,~$\mathbf{a}$ is non-increasing,~$\mathbf{a} \leq \up{\mathbf{a}}$ holds, and~$(\low{\mathbf{r}}, \up{\mathbf{r}}, \mathbf{a}, \up{\mathbf{a}})$ is realizable.
	\label{Theo:Incrementation}
\end{lemma}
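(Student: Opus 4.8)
The plan is to dispatch three easy structural claims, then reduce the realizability assertion to two dominance conditions via Theorem~\ref{Theo:Schocker}, isolating the single nontrivial inequality.

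First I would settle the existence of $i$ and $j$. Because $(\low{\mathbf{r}},\up{\mathbf{r}},\mathbf{c},\up{\mathbf{c}})$ is realizable while $(\low{\mathbf{r}},\up{\mathbf{r}},\mathbf{c},\mathbf{c})$ is not, the two four-tuples cannot coincide, so $\mathbf{c}\neq\up{\mathbf{c}}$; since $\mathbf{c}\le\up{\mathbf{c}}$ the set $\{k:c_k<\up{c}_k\}$ is nonempty and $i$ is its maximum, and $j=\min\{\ell:c_\ell=c_i\}\le i$ exists. Monotonicity of $\mathbf{a}$ is immediate: as $j$ is the left-most index with value $c_j=c_i$, either $j=1$ or $c_{j-1}>c_j$, so raising $c_j$ by one preserves $a_{j-1}\ge a_j$, while $a_j=c_j+1>c_j\ge c_{j+1}=a_{j+1}$. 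For $\mathbf{a}\le\up{\mathbf{a}}$ I would check the three coordinates of the definition: at $k=j$, $a_j=c_i+1\le\up{c}_i=\up{a}_j$ using $c_i<\up{c}_i$; at $k=i$, $a_i=c_i=c_j\le\up{c}_j=\up{a}_i$ using $\mathbf{c}\le\up{\mathbf{c}}$; and $a_k=c_k\le\up{c}_k=\up{a}_k$ otherwise.

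Next I would invoke Theorem~\ref{Theo:Schocker}, legitimate since the two lower-bound vectors $\low{\mathbf{r}}$ and $\mathbf{a}$ are non-increasing: $(\low{\mathbf{r}},\up{\mathbf{r}},\mathbf{a},\up{\mathbf{a}})$ is realizable iff $\low{\mathbf{r}}\trianglelefteq\up{\mathbf{a}}'$ and $\mathbf{a}\trianglelefteq\up{\mathbf{r}}'$. The first condition is free: $\up{\mathbf{a}}$ is obtained from $\up{\mathbf{c}}$ by transposing the entries at $i$ and $j$, and the conjugate depends only on the multiset of values, so $\up{\mathbf{a}}'=\up{\mathbf{c}}'$; hence $\low{\mathbf{r}}\trianglelefteq\up{\mathbf{a}}'$ is exactly the condition $\low{\mathbf{r}}\trianglelefteq\up{\mathbf{c}}'$ guaranteed by realizability of $(\low{\mathbf{r}},\up{\mathbf{r}},\mathbf{c},\up{\mathbf{c}})$. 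Since $\mathbf{a}$ and $\mathbf{c}$ differ only at $j$, the partial sums satisfy $\mysum{k}{\mathbf{a}}=\mysum{k}{\mathbf{c}}$ for $k<j$ and $\mysum{k}{\mathbf{a}}=\mysum{k}{\mathbf{c}}+1$ for $k\ge j$. Thus $\mathbf{a}\trianglelefteq\up{\mathbf{r}}'$ holds for $k<j$ directly from $\mathbf{c}\trianglelefteq\up{\mathbf{r}}'$, and the whole assertion comes down to the strict inequality $\mysum{k}{\mathbf{c}}<\mysum{k}{\up{\mathbf{r}}'}$ for every $k\ge j$.

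This strict inequality is the main obstacle, and it is precisely where the hypothesis that $(\low{\mathbf{r}},\up{\mathbf{r}},\mathbf{c},\mathbf{c})$ is \emph{not} realizable enters. Since that four-tuple is non-realizable but $\mathbf{c}\trianglelefteq\up{\mathbf{r}}'$ already holds, Theorem~\ref{Theo:Schocker} forces $\low{\mathbf{r}}\not\trianglelefteq\mathbf{c}'$. I would argue by contradiction: suppose $\mysum{k_0}{\mathbf{c}}=\mysum{k_0}{\up{\mathbf{r}}'}$ for some $k_0\ge j$ and derive $\low{\mathbf{r}}\trianglelefteq\mathbf{c}'$, contradicting non-realizability. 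The mechanism I expect to use is a truncation-plus-duality argument: restricting to the first $k_0$ columns, the equality says the partition $(c_1,\dots,c_{k_0})$ and the capped row vector $(\min(\up{r}_i,k_0))_i$ have equal total, while $\mathbf{c}\trianglelefteq\up{\mathbf{r}}'$ supplies domination of the truncated conjugate; since conjugation reverses dominance among equal-sum partitions, this yields $(\min(\up{r}_i,k_0))_i\trianglelefteq(c_1,\dots,c_{k_0})'\le\mathbf{c}'$ at the level of partial sums. Converting this lower bound on $\mysum{m}{\mathbf{c}'}$ into the required $\mysum{m}{\low{\mathbf{r}}}\le\mysum{m}{\mathbf{c}'}$ is the delicate point, because a row whose lower bound exceeds $k_0$ is not controlled by the capped vector alone; here I expect to exploit the left-most choice of $j$, which pins the value $c_i=c_j$ and hence the shape of $\mathbf{c}'$ near that threshold, together with the condition $\low{\mathbf{r}}\trianglelefteq\up{\mathbf{c}}'$ to absorb such large rows. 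Coordinating these two facts to close the gap is the crux of the proof.
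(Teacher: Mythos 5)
Your treatment of the routine parts --- existence of~$i$ and~$j$, monotonicity of~$\mathbf{a}$, the coordinatewise check of~$\mathbf{a}\leq\up{\mathbf{a}}$, the observation~$\up{\mathbf{a}}'=\up{\mathbf{c}}'$, and the reduction of everything to the strict inequality~$\mysum{k}{\mathbf{c}}<\mysum{k}{\up{\mathbf{r}}'}$ for~$k\geq j$ --- matches the paper and is correct. The problem is that the one step you yourself flag as ``the crux'' is exactly the substance of the lemma, and your route to it is left open. You propose to show that an equality~$\mysum{k_0}{\mathbf{c}}=\mysum{k_0}{\up{\mathbf{r}}'}$ at some~$k_0\geq j$ forces the \emph{global} dominance~$\low{\mathbf{r}}\trianglelefteq\mathbf{c}'$, contradicting the non-realizability of~$(\low{\mathbf{r}},\up{\mathbf{r}},\mathbf{c},\mathbf{c})$. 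That is a much stronger conclusion than you need, and your truncation-plus-duality sketch admittedly fails to control rows whose bounds exceed~$k_0$; the proposal ends by conceding that coordinating the needed facts is still open, so the argument is not finished.

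The paper closes this step differently and more locally. Take the \emph{right-most}~$k$ violating~$\mathbf{a}\trianglelefteq\up{\mathbf{r}}'$; as you note, this forces~$\mysum{k}{\mathbf{c}}=\mysum{k}{\up{\mathbf{r}}'}$ and~$k\geq j$. The non-realizability hypothesis is used only to show~$i>k$: if~$i\leq k$, every column to the right of~$i$ already sits at its upper bound, and no column~$\ell\leq i\leq k$ can be increased without pushing the~$k$-th partial sum above~$\mysum{k}{\up{\mathbf{r}}'}$, so the only attainable column-degree vector of a realization of~$(\low{\mathbf{r}},\up{\mathbf{r}},\mathbf{c},\up{\mathbf{c}})$ would be~$\mathbf{c}$ itself, contradicting that~$(\low{\mathbf{r}},\up{\mathbf{r}},\mathbf{c},\mathbf{c})$ is not realizable. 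Then, with~$j\leq k<i$ and~$j$ left-most, the non-increasing vector~$\mathbf{c}$ is constant on~$[j,i]$, so~$c_k=c_{k+1}$; the right-most choice of~$k$ gives~$\mysum{k+1}{\mathbf{c}}<\mysum{k+1}{\up{\mathbf{r}}'}$, hence~$\up{r}'_{k}\geq\up{r}'_{k+1}>c_{k+1}=c_k$, and subtracting this from the equality at~$k$ yields~$\mysum{k-1}{\mathbf{c}}>\mysum{k-1}{\up{\mathbf{r}}'}$, contradicting~$\mathbf{c}\trianglelefteq\up{\mathbf{r}}'$ (or plainly absurd for~$k=1$). Some argument of this kind --- in particular one that actually exploits the left-most choice of~$j$ and the right-most choice of~$i$ --- is what your proposal is missing.
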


\begin{proof}
	
	As~$(\low{\mathbf{r}}, \up{\mathbf{r}}, \mathbf{c}, \up{\mathbf{c}})$ is realizable by assumption and~$(\low{\mathbf{r}}, \up{\mathbf{r}}, \mathbf{c}, \mathbf{c})$ is not,~$\mathbf{c}$ cannot be equal to~$\up{\mathbf{c}}$. Thus, there must be a position~$i$ with~$c_i < \up{c}_i$.
	As~$j$ is chosen left-most with~$c_j = c_i$, either~$j=1$ or~$c_{j-1} > c_{j}$ must hold. In both cases, the integer vector~$\mathbf{a}$ is non-increasing as~$\mathbf{c}$ is non-increasing by assumption. 
	
	We show next that~$\mathbf{a} \leq \up{\mathbf{a}}$ holds.
	As~$c_i < \up{c}_i$ holds by assumption, it follows that~$a_j = c_j + 1 = c_i + 1 \leq \up{c}_i = \up{a}_j$.
	To see that~$a_i \leq \up{a}_i$ holds, consider two cases.
	If~$i=j$, the inequality~$\up{a}_i = \up{c}_i \geq c_i + 1 = a_i$ holds as~$c_i < \up{c}_i$. Otherwise, if~$i \not= j$, we derive~$\up{a}_i = \up{c}_j \geq c_j = c_i = a_i$. Hence,~$a_i \leq \up{a}_i$ holds in both cases. Since~$a_k = c_k$ and~$\up{a}_k = \up{c}_k$ for all~$k \not= i,j$, we infer~$\mathbf{a} \leq \up{\mathbf{a}}$.
	
	It remains to show that~$(\low{\mathbf{r}}, \up{\mathbf{r}}, \mathbf{a}, \up{\mathbf{a}})$ is realizable. By Theorem~\ref{Theo:Schocker},~$(\low{\mathbf{r}}, \up{\mathbf{r}}, \mathbf{a}, \up{\mathbf{a}})$ is realizable if and only if~$\low{\mathbf{r}} \trianglelefteq \up{\mathbf{a}}'$ and~$\mathbf{a} \trianglelefteq \up{\mathbf{r}}'$ hold. Since~$\up{\mathbf{a}}$ is a permutation of~$\up{\mathbf{c}}$, the sets~$\{ \ell \colon \up{a}_i \geq k \}$ and~$\{ \ell \colon \up{c}_i \geq k \}$ are identical for all~$k$,
	thus,~$\up{\mathbf{a}}'$ equals~$\up{\mathbf{c}}'$. 
	As~$\low{\mathbf{r}} \trianglelefteq \up{\mathbf{c}}'$ holds due to the realizability of~$(\low{\mathbf{r}}, \up{\mathbf{r}}, \mathbf{c}, \up{\mathbf{c}})$, the dominance relation~$\low{\mathbf{r}} \trianglelefteq \up{\mathbf{a}}'$ holds, too.
	
	Now assume that~$\mathbf{a} \trianglelefteq \up{\mathbf{r}}'$ does not hold. Thus, there is a right-most position~$k$ such that~$\mysum{k}{\mathbf{a}} > \mysum{k}{\up{\mathbf{r}}'}$. 
	Since~$\mysum{k}{\mathbf{c}} \leq \mysum{k}{\up{\mathbf{r}}'}$ must hold as~$(\low{\mathbf{r}}, \up{\mathbf{r}}, \mathbf{c}, \up{\mathbf{c}})$  is realizable and~$\mathbf{a}$ differs from~$\mathbf{c}$ only at position~$j$, we conclude~$\mysum{k}{\mathbf{c}} = \mysum{k}{\up{\mathbf{r}}'}$ and~$j \leq k$.
	Next, we show that~$i > k$. Therefore, assume the contrary and let~$i \leq k$. Since~$i$ is chosen right-most, it follows that~$c_\ell = \up{c}_\ell$ for each~$\ell$ in range~$i < \ell \leq n_2$. Thus, increasing an arbitrary~$c_\ell$ in range~$i < \ell \leq n_2$ by a positive amount would violate~$\mathbf{c} \leq \up{\mathbf{c}}$ whereas increasing an arbitrary~$c_\ell$ in range~$1 \leq \ell \leq i$ violates the realizability of~$(\low{\mathbf{r}}, \up{\mathbf{r}}, \mathbf{c}, \up{\mathbf{c}})$. Hence,~$(\low{\mathbf{r}}, \up{\mathbf{r}}, \mathbf{c}, \up{\mathbf{c}})$ can only be realizable if~$(\low{\mathbf{r}}, \up{\mathbf{r}}, \mathbf{c}, \mathbf{c})$ is, which contradicts our assumption. Thus,~$i>k$.
	
	Since~$j$ was chosen left-most with~$c_j = c_i$, we obtain from~$j \leq k$ and~$k < i$ that~$c_j = \ldots = c_k = c_{k+1} = \ldots = c_i$. Since~$k$ is right-most and~$(\low{\mathbf{r}}, \up{\mathbf{r}}, \mathbf{c}, \up{\mathbf{c}})$ is realizable, we conclude that~$\mysum{k+1}{\mathbf{c}} < \mysum{k+1}{\up{\mathbf{r}}'}$ holds and thus,~$\up{r}_{k+1}' > c_{k+1}$.
	As~$\up{\mathbf{r}}'$ is non-increasing by definition, we conclude further that~$\up{r}_k' \geq \up{r}_{k+1}' > c_{k+1} = c_k$.
	Consequently,~$\mysum{k}{\mathbf{c}} = \mysum{k}{\up{\mathbf{r}}'}$ holds if and only if~$\mysum{k-1}{\mathbf{c}} > \mysum{k-1}{\up{\mathbf{r}}'}$ holds, which  contradicts the realizability of~$(\low{\mathbf{r}}, \up{\mathbf{r}}, \mathbf{c}, \up{\mathbf{c}})$ if~$k>1$, or is plainly wrong if~$k=1$.
	As a consequence,~$\mathbf{a} \trianglelefteq \up{\mathbf{r}}'$ must hold and thus,~$(\low{\mathbf{r}}, \up{\mathbf{r}}, \mathbf{a}, \up{\mathbf{a}})$ is realizable.
\end{proof}

\begin{lemma}
	Let~$\mathbf{c}^{(k)}$ and~$\up{\mathbf{c}}^{(k)}$ be the state of the integer vectors~$\mathbf{c}$ and~$\up{\mathbf{c}}$ after exactly~$k$ iterations of the outer loop in Alg.~\ref*{Alg:PhaseOne}.
	Then,~$(\low{\mathbf{r}}, \up{\mathbf{r}}, \mathbf{c}^{(k)}, \mathbf{c}^{(k)})$ is not realizable unless~$k \geq \delta_1 = \max\{ \mysum{j}{\low{\mathbf{r}}} - \mysum{j}{\low{\mathbf{c}}'} \colon 1 \leq j \leq n_1 \}$.
	\label{Theo:Termination}
\end{lemma}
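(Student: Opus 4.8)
The plan is to prove the contrapositive: for every $k < \delta_1$, the tuple $(\low{\mathbf{r}}, \up{\mathbf{r}}, \mathbf{c}^{(k)}, \mathbf{c}^{(k)})$ fails to be realizable. By Theorem~\ref{Theo:Schocker}, realizability of this tuple is equivalent to the two dominance relations $\low{\mathbf{r}} \trianglelefteq (\mathbf{c}^{(k)})'$ and $\mathbf{c}^{(k)} \trianglelefteq \up{\mathbf{r}}'$, so it suffices to exhibit a single index at which the first relation is violated. To this end, I fix an index $j^*$ that attains the maximum in the definition of $\delta_1$, i.e.\ $\delta_1 = \mysum{j^*}{\low{\mathbf{r}}} - \mysum{j^*}{\low{\mathbf{c}}'}$. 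I will track how the conjugate partial sum $\mysum{j^*}{(\mathbf{c}^{(k)})'}$ evolves and show that it cannot grow fast enough to reach $\mysum{j^*}{\low{\mathbf{r}}}$ before $\delta_1$ iterations have elapsed.

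The heart of the argument is a per-iteration bound on the growth of the conjugate's partial sums. Each pass of the outer loop increments exactly one entry $c_j$ of $\mathbf{c}$ by one, say from value $v$ to $v+1$. Recalling the definition of the conjugate, $c'_i = |\{ \ell \colon c_\ell \geq i \}|$, this single increment affects only the entry $c'_{v+1}$, raising it by exactly one, while leaving every other conjugate entry unchanged. Consequently, each iteration increases the partial sum $\mysum{j^*}{\mathbf{c}'}$ by at most one. Establishing this conjugate-arithmetic fact cleanly is the step I expect to require the most care, since it hinges on precise bookkeeping of how incrementing a single column degree shifts the cumulative conjugate counts.

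With the per-step bound in hand, a straightforward induction on $k$ yields $\mysum{j^*}{(\mathbf{c}^{(k)})'} \leq \mysum{j^*}{\low{\mathbf{c}}'} + k$, the base case $k = 0$ being immediate from $\mathbf{c}^{(0)} = \low{\mathbf{c}}$. Therefore, whenever $k < \delta_1$, I obtain
\[
\mysum{j^*}{(\mathbf{c}^{(k)})'} \;\leq\; \mysum{j^*}{\low{\mathbf{c}}'} + k \;<\; \mysum{j^*}{\low{\mathbf{c}}'} + \delta_1 \;=\; \mysum{j^*}{\low{\mathbf{r}}}.
\]
This strict inequality shows that $\low{\mathbf{r}} \trianglelefteq (\mathbf{c}^{(k)})'$ is violated at position $j^*$, so by Theorem~\ref{Theo:Schocker} the tuple $(\low{\mathbf{r}}, \up{\mathbf{r}}, \mathbf{c}^{(k)}, \mathbf{c}^{(k)})$ is not realizable, as claimed.

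I should also note that the second dominance relation $\mathbf{c}^{(k)} \trianglelefteq \up{\mathbf{r}}'$ plays no role in this direction: the obstruction comes entirely from the lower-bound side, which matches the intuition that Phase One is precisely closing the deficit measured by $\delta_1$. In combination with Lemma~\ref{Theo:Incrementation}, which guarantees that every incrementation step keeps the associated upper-bounded tuple realizable, this lower bound justifies that the loop must run, and need only run, the prescribed number $\delta_1$ of iterations.
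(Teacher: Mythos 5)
Your proposal is correct and rests on the same key observation as the paper's proof: incrementing a single component of $\mathbf{c}$ raises exactly one entry of the conjugate vector, so each partial sum $\mysum{j}{\mathbf{c}'}$ grows by at most one per iteration, and the deficit $\delta_1$ at a maximizing index $j^*$ cannot be closed in fewer than $\delta_1$ steps. Your write-up is in fact slightly tighter than the paper's, which additionally sketches the converse (that dominance is attained after exactly $\delta_1$ iterations because the algorithm always increments a smallest component), whereas you prove only the direction the lemma actually asserts.
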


\begin{proof}
	By Lemma~\ref{Theo:Incrementation},~$\mathbf{c}^{(k)}$ is non-increasing and~$\mathbf{c}^{(k)} \leq \up{\mathbf{c}}^{(k)}$ holds.
	Hence, by Theorem~\ref{Theo:Schocker}, the four-tuple~$(\low{\mathbf{r}}, \up{\mathbf{r}}, \mathbf{c}^{(k)}, \mathbf{c}^{(k)})$ is realizable if and only if~$\low{\mathbf{r}} \trianglelefteq (\mathbf{c}^{(k)})'$ and~$\mathbf{c}^{(k)} \leq \up{\mathbf{r}}'$ hold.
	Whereas the latter domination relation is ensured by Lemma~\ref{Theo:Incrementation}, the condition~$\low{\mathbf{r}} \trianglelefteq (\mathbf{c}^{(k)})'$ will not hold if~$k <\delta_1$.
	To see why this is true, consider incrementing an arbitrary component~$c^{(k)}_j$ to~$c^{(k+1)}_j = c^{(k)}_j + 1$.
	Whereas the components~$(c^{(k+1)})'_{\ell}$ will be equal to~$(c^{(k)})'_{\ell}$ for each~$\ell \not= c^{(k+1)}_j$, the component~$(c^{(k+1)})'_{\ell}$ will be of value~$(c^{(k+1)})'_{\ell} = (c^{(k)})'_{\ell} + 1$ for~$\ell = c^{(k+1)}_j$.
	As a consequence, the partial sums~$\mysum{}{\mathbf{c}^{(k+1)}}$ are affected by
	\[
	\mysum{i}{(\mathbf{c}^{(k+1)})'} = 
	\begin{cases}
	\mysum{i}{(\mathbf{c}^{(k)})'} + 1 ,&  \text{if } c^{(k)}_j < i \leq n_1,\\
	\mysum{i}{(\mathbf{c}^{(k)})'} ,& \text{otherwise}.
	\end{cases}
	\]
	Now let~$p$ be an arbitrary position such that ~$\mysum{p}{\low{\mathbf{r}}} > \mysum{p}{\low{\mathbf{c}}'}$ holds at the beginning of the first phase.
	Since~$\mysum{p}{\low{\mathbf{r}}}$ stays constant,
	the inequality~$\mysum{p}{\low{\mathbf{r}}} \leq \mysum{p}{(\mathbf{c}^{(k)})'}$ will be established as soon as a number of~$k \geq \mysum{p}{\low{\mathbf{r}}} - \mysum{p}{\mathbf{c}'}$ components of value~$c_j < p$ have been incremented.
	As Alg.~\ref{Alg:PhaseOne} chooses~$c_j$ as small as possible, the domination relation~$\low{\mathbf{r}} \trianglelefteq (\mathbf{c}^{(k)})'$ will hold after exactly~$\delta_1 = \max\{ \mysum{j}{\low{\mathbf{r}}} - \mysum{j}{\low{\mathbf{c}}'} \colon 1 \leq j \leq n_1 \}$ iterations.
\end{proof}

\paragraph{Running Time}
Determining the quantity~$\delta_1$ requires a running time of~$\mathcal{O}(n_1)$.
The outer loop of Alg.~\ref{Alg:PhaseOne} runs exactly~$\delta_1$ steps. In each step, the algorithm has to determine the position~$j \gets \min \{ \ell \colon c_{\ell} = c_i\}$ of the left-most occurrence of a component equal to~$c_i$. This can be achieved in constant time if we use a pre-computed lookup-table which is updated after each incrementation. Fortunately, each increment operations only requires a table-update at the positions~$c_i$ and~$c_i+1$ which can be executed in constant time. Thus, letting beside the inner loop, each step of the outer loop is executed in constant time. In contrast, the inner loop may need linear time. However, the variable~$i$ can be decreased at most~$n_2$ times during the whole process. Thus, the running time of the outer loop is~$\mathcal{O}(n_2 + \delta_1)$.
In summary, since~$\delta_1 \leq \mysum{n_1}{\low{\mathbf{r}}}$, the running time of the first phase can be bounded by~$\mathcal{O}(n_1 + n_2 + \mysum{n_1}{\low{\mathbf{r}}})$.

\subsection{Phase Two}

After the first phase has stopped,~$\low{\mathbf{r}}$ and~$\mathbf{c}$ are non-increasing integer vectors and~$(\low{\mathbf{r}}, \up{\mathbf{r}}, \mathbf{c}, \mathbf{c})$ is realizable by Lemma~\ref{Theo:Incrementation} and~\ref{Theo:Termination}.
Hence, there is a bipartite graph~$G=(U,V,E)$ with~$d_G(v_j) = c_j$ for~$j$ in range~$1\leq j \leq n_2$ and~$\low{r}_i \leq d_G(u_i) \leq \up{r}_i$ for~$i$ in range~$1 \leq i \leq n_1$.
If we switch the roles of vertex sets~$U$ and~$V$, we gain an instance of the realization problem in which the four-tuple~$(\mathbf{c}, \mathbf{c}, \low{\mathbf{r}}, \up{\mathbf{r}})$ is realizable,~$\mathbf{c}$ and~$\low{\mathbf{r}}$ are non-increasing, and~$\low{\mathbf{r}} \leq \up{\mathbf{r}}$ holds.
Thus, we can re-apply the first phase to the modified instance to construct a suitable integer vector~$\mathbf{r}$ such that~$(\mathbf{c}, \mathbf{c}, \mathbf{r}, \mathbf{r})$ is realizable. After switching back the roles of~$U$ and~$V$, we gain a bi-graphical pair~$(\mathbf{r}, \mathbf{c})$ of integer vectors. Afterwards, a realization is constructed by Ryser's algorithm.

\begin{algorithm}
	\LinesNumbered
	\DontPrintSemicolon
	\KwIn{realizable four-tuple~$(\low{\mathbf{r}}, \up{\mathbf{r}}, \mathbf{c}, \mathbf{c})$}
	\KwOut{bi-graphical pair~$(\mathbf{r},\mathbf{c})$}
	\caption{\textsc{Phase Two}\label{Alg:PhaseTwo}}
	$\mathbf{r} \gets \low{\mathbf{r}}$\tcp*{initialize~$\mathbf{r}$ with lower bounds~$\low{\mathbf{r}}$}
	$\delta_2 \gets \mysum{n_2}{\mathbf{c}} - \mysum{n_1}{\low{\mathbf{r}}}$\tcp*{calculate number of steps}
	$i \gets n_1$\tcp*{right-most position such that~$\low{r}_i < \up{r}_i$}
	\For{$k = 1, 2, \ldots, \delta_2$}{
		\While(\tcp*[f]{proceed to next position with~$r_i < \up{r}_i$}){$r_i = \up{r}_i$}{
			$i \gets i - 1$\;
		}
		$j \gets \min \{ \ell \colon r_{\ell} = r_i\}$ \tcp*{identify left-most~$r_j$ with~$r_j = r_i$}
		swap~$\up{r}_i$ and~$\up{r}_j$\;
		$r_j \gets r_j + 1$\;
	}
	\Return{$(\mathbf{r}, \mathbf{c})$}\;
\end{algorithm}	

\paragraph{Example}
We start where the first phase stopped.
\begin{alignat*}{4}
\low{\mathbf{r}}  &= (4,1,0) \quad
&\up{\mathbf{r}}  &= (4,2,3) \quad
&\mathbf{c} &= (2,2,1,1,0) \quad
&\mathbf{c}' &= (4,2,0) \\
\mysum{}{\low{\mathbf{r}}} &= (4,5,5) \quad
&\mysum{}{\up{\mathbf{r}}} &= (4,6,9) \quad
&\mysum{}{\mathbf{c}} &= (2,4,5,6,6) \quad
&\mysum{}{\mathbf{c}'} &= (4,6,6).
\end{alignat*}
The number of iterations is determined by~$\delta_2 \gets 6-5 = 1$.
\begin{enumerate}
	\item The inner loop breaks with~$i=3$. The left-most component equal to~$r_3 = 0$ is at position~$j=3$. Thus, the algorithm switches~$\up{r}_3$ and~$\up{r}_3$ and increments~$r_3$. We gain~$\mathbf{r}=(4,1,1)$ and~$\up{\mathbf{r}}=(4,2,3)$.
\end{enumerate}
The situation at the end of the second phase:
\begin{alignat*}{3}
\mathbf{r} &= (4,1,1) \quad
&\mathbf{c} &= (2,2,1,1,0) \quad
&\mathbf{c}' &= (4,2,0)\\
\mysum{}{\mathbf{r}} &= (4,5,6) \quad
&\mysum{}{\mathbf{c}} &= (2,4,5,6,6) \quad
&\mysum{}{\mathbf{c}'} &= (4,6,6)
\end{alignat*}
We verify by Theorem~\ref{Theo:GaleRyser} that~$(\mathbf{r}, \mathbf{c})$ is bi-graphical.
Fig.~\ref{Fig:ExampleRealization} shows a realization.
\begin{figure}
	\centering
	\includegraphics[]{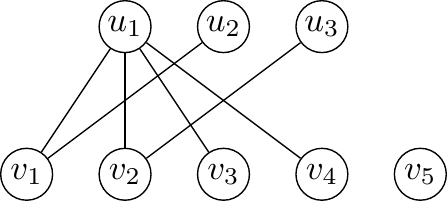}
	\caption{Realization of the sequence pair~$\mathbf{r} =  (4,1,1), \mathbf{c} = (2,2,1,1,0)$.}
	\label{Fig:ExampleRealization}
\end{figure} 

\paragraph{Proof of Correctness}
The correctness of the second phase can be shown very similarly to phase one and follows directly from the following theorem.

\begin{lemma}
	Let~$\mathbf{r}^{(k)}$ and~$\up{\mathbf{r}}^{(k)}$ be the state of the integer vectors~$\mathbf{r}$ and~$\up{\mathbf{r}}$ after exactly~$k$ iterations of the outer loop in Alg.~\ref*{Alg:PhaseTwo}.
	Then,~$(\mathbf{r}^{(k)}, \mathbf{r}^{(k)}, \mathbf{c}, \mathbf{c})$ is realizable if and only if~$k = \delta_2 = \mysum{n_2}{\mathbf{c}} - \mysum{n_1}{\low{\mathbf{r}}}$.
	\label{Theo:Termination2}
\end{lemma}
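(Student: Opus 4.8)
The plan is to reduce the claim to the Gale--Ryser criterion (Theorem~\ref{Theo:GaleRyser}) and then split it into its two directions: the forward direction is a one-line counting argument, while the converse is handled by reinterpreting Phase Two as Phase One run on the swapped instance, so that Lemma~\ref{Theo:Incrementation} applies verbatim. First I would observe that, since the lower and upper column bounds of the tuple $(\mathbf{r}^{(k)}, \mathbf{r}^{(k)}, \mathbf{c}, \mathbf{c})$ coincide, this tuple is realizable if and only if the pair $(\mathbf{r}^{(k)}, \mathbf{c})$ is bi-graphical. By Theorem~\ref{Theo:GaleRyser} the latter holds exactly when both $\mysum{n_1}{\mathbf{r}^{(k)}} = \mysum{n_2}{\mathbf{c}}$ and $\mathbf{r}^{(k)} \trianglelefteq \mathbf{c}'$ hold, and I would track these two conditions separately across the iterations.

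For the direction ``realizable $\Rightarrow k = \delta_2$'' a counting argument suffices. Each iteration of the outer loop increments exactly one component of $\mathbf{r}$ by one, so $\mysum{n_1}{\mathbf{r}^{(k)}} = \mysum{n_1}{\low{\mathbf{r}}} + k$. Since sum equality of the two sides is necessary for bi-graphicality irrespective of ordering, and it holds precisely when $k = \mysum{n_2}{\mathbf{c}} - \mysum{n_1}{\low{\mathbf{r}}} = \delta_2$, every $k \neq \delta_2$ makes the two totals differ and the tuple non-realizable.

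For the converse I would exploit the symmetry of the realization problem under exchanging $U$ and $V$: Phase Two is exactly Phase One run on the swapped, realizable four-tuple $(\mathbf{c}, \mathbf{c}, \low{\mathbf{r}}, \up{\mathbf{r}})$, with $\mathbf{r}^{(k)}$ playing the role of the incremented vector and $\up{\mathbf{r}}^{(k)}$ its upper bound. Applying Lemma~\ref{Theo:Incrementation} inductively under this identification, each iteration keeps $\mathbf{r}^{(k)}$ non-increasing, preserves $\mathbf{r}^{(k)} \leq \up{\mathbf{r}}^{(k)}$, and keeps $(\mathbf{r}^{(k)}, \up{\mathbf{r}}^{(k)}, \mathbf{c}, \mathbf{c})$ realizable. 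By Theorem~\ref{Theo:Schocker} the realizability of this unequal-bound tuple already entails its first dominance condition $\mathbf{r}^{(k)} \trianglelefteq \mathbf{c}'$ at \emph{every} step, with $\mathbf{c}'$ the conjugate of the common column bound $\mathbf{c}$. In particular $\mathbf{r}^{(\delta_2)} \trianglelefteq \mathbf{c}'$ holds, and together with the sum equality at $k = \delta_2$ both Gale--Ryser conditions are met, so $(\mathbf{r}^{(\delta_2)}, \mathbf{c})$ is bi-graphical and $(\mathbf{r}^{(\delta_2)}, \mathbf{r}^{(\delta_2)}, \mathbf{c}, \mathbf{c})$ is realizable.

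The main obstacle is to justify that Lemma~\ref{Theo:Incrementation} may legitimately be invoked at each of the first $\delta_2$ iterations, which requires its hypothesis that the equal-bound tuple $(\mathbf{r}^{(k)}, \mathbf{r}^{(k)}, \mathbf{c}, \mathbf{c})$ is \emph{not} yet realizable to hold for $0 \leq k < \delta_2$. This is precisely where the two directions interlock: the forward counting argument shows that for $k < \delta_2$ the totals $\mysum{n_1}{\mathbf{r}^{(k)}}$ and $\mysum{n_2}{\mathbf{c}}$ still differ, so the tuple is indeed not realizable, and hence the lemma guarantees both that a right-most position $i$ with $r_i < \up{r}_i$ still exists (so the algorithm never stalls before step $\delta_2$) and that realizability of the unequal-bound tuple is carried one step further. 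A careful writeup would spell out this induction, checking the base case $k=0$ from the Phase One output and verifying that the swapped hypotheses of Lemma~\ref{Theo:Incrementation} ($\mathbf{c}$ non-increasing and $\mathbf{c} \leq \mathbf{c}$) hold throughout.
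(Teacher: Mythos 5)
Your proposal is correct and follows essentially the same route as the paper: the forward direction via the sum-equality necessary condition (each iteration adds one to $\mysum{n_1}{\mathbf{r}^{(k)}}$), and the converse by invoking Lemma~\ref{Theo:Incrementation} on the role-swapped instance to maintain realizability of $(\mathbf{r}^{(k)}, \up{\mathbf{r}}^{(k)}, \mathbf{c}, \mathbf{c})$ and hence the dominance $\mathbf{r}^{(k)} \trianglelefteq \mathbf{c}'$ until the sums match at $k=\delta_2$. Your writeup is in fact somewhat more careful than the paper's, since you explicitly verify the induction hypothesis needed to apply Lemma~\ref{Theo:Incrementation} at each step (namely that the equal-bound tuple is not yet realizable for $k<\delta_2$), which the paper leaves implicit.
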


\begin{proof}
	By Theorem~\ref{Theo:Schocker}, the four-tuple~$(\mathbf{r}^{(k)}, \mathbf{r}^{(k)}, \mathbf{c}, \mathbf{c})$ is realizable if and only if~$\mathbf{r}^{(k)} \leq \mathbf{c}'$ and~$\mathbf{c} \trianglelefteq (\mathbf{r}^{(k)})'$ hold. In particular,~$\mysum{n_1}{\mathbf{r}^{(k)}} = \mysum{n_2}{\mathbf{c}}$ must hold.
	Since~$\mysum{n_1}{\low{\mathbf{r}}^{(k)}} \not= \mysum{n_2}{\mathbf{c}}$ for~$k \not= \mysum{n_2}{\mathbf{c}} - \mysum{n_1}{\low{\mathbf{r}}}$, the four-tuple~$(\low{\mathbf{r}}^{(k)}, \low{\mathbf{r}}^{(k)}, \mathbf{c}, \mathbf{c})$ cannot be realizable if~$k \not= \delta_2$.
	On the other hand, as~$(\mathbf{r}^{(k)}, \up{\mathbf{r}}^{(k)}, \mathbf{c}, \mathbf{c})$ is realizable and thus~$\mathbf{r}^{(k)} \trianglelefteq \mathbf{c}'$ holds after each iteration by Lemma~\ref{Theo:Incrementation}, the four-tuple~$(\mathbf{r}^{(k)}, \mathbf{r}^{(k)}, \mathbf{c}, \mathbf{c})$ will be realizable as soon as~$\mysum{n_1}{\mathbf{r}^{(k)}} = \mysum{n_2}{\mathbf{c}}$ holds, which is true after exactly~$\delta_2$ iterations.
\end{proof}

\paragraph{Running Time}

By similar arguments as before, the running time of the second phase can be described by~$\mathcal{O}(n_1 + n_2 + \mysum{n_2}{\mathbf{c}})$.
As Ryser's algorithm produces a realization~$G=(U,V,E)$ of the bi-graphical pair~$(\mathbf{r}, \mathbf{c})$ in~$\mathcal{O}(|U| + |V| + |E|)$ time and~$|E| = \mysum{n_2}{\mathbf{c}} \geq \mysum{n_1}{\low{\mathbf{r}}}$, the total running time of the realization algorithm is
$\mathcal{O}(|U| + |V| + |E|)$.
Since we cannot hope to construct a bipartite graph in sub-linear time, our algorithm is asymptotically optimal.

\subsection{Edge-Minimality}


\begin{lemma}
	The bipartite graph produced by our algorithm is edge-minimal.
\end{lemma}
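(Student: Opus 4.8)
The plan is to show that the edge count of the graph returned by the algorithm equals a quantity that is simultaneously a universal lower bound on the edge count of every realization. First I would record the number of edges actually produced. Phase One starts from $\low{\mathbf{c}}$ and performs exactly $\delta_1$ unit increments, so $\mysum{n_2}{\mathbf{c}} = \mysum{n_2}{\low{\mathbf{c}}} + \delta_1$; by Lemma~\ref{Theo:Termination2}, Phase Two terminates with $\mysum{n_1}{\mathbf{r}} = \mysum{n_2}{\mathbf{c}}$. Since the number of edges of a bipartite graph equals the degree sum on either side, the returned graph has exactly $m = \mysum{n_2}{\low{\mathbf{c}}} + \delta_1$ edges. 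It therefore suffices to prove that every realization of $(\low{\mathbf{r}}, \up{\mathbf{r}}, \low{\mathbf{c}}, \up{\mathbf{c}})$ has at least $\mysum{n_2}{\low{\mathbf{c}}} + \delta_1$ edges.

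Next I would fix an arbitrary realization $G$ with $U$-degree vector $\mathbf{r}^\ast$ and $V$-degree vector $\mathbf{c}^\ast$, so that $\low{\mathbf{r}} \le \mathbf{r}^\ast$, $\low{\mathbf{c}} \le \mathbf{c}^\ast \le \up{\mathbf{c}}$, the pair $(\mathbf{r}^\ast, \mathbf{c}^\ast)$ is bi-graphical, and $|E(G)| = \mysum{n_2}{\mathbf{c}^\ast} =: m^\ast$. The engine of the argument is the conjugate identity $\mysum{k}{\mathbf{x}'} = \sum_j \min(x_j, k)$, which rewrites a partial conjugate sum as a total minus an overflow term: $\mysum{k}{(\mathbf{c}^\ast)'} = m^\ast - \sum_j (c_j^\ast - k)^+$. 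Re-sorting $\mathbf{r}^\ast$ non-increasingly into $\mathbf{r}^{\ast\downarrow}$ only relabels $U$, so $(\mathbf{r}^{\ast\downarrow}, \mathbf{c}^\ast)$ is still bi-graphical, and Theorem~\ref{Theo:GaleRyser} yields $\mathbf{r}^{\ast\downarrow} \trianglelefteq (\mathbf{c}^\ast)'$. A rearrangement step then gives $\mysum{k}{\low{\mathbf{r}}} \le \mysum{k}{\mathbf{r}^{\ast\downarrow}} \le \mysum{k}{(\mathbf{c}^\ast)'}$ for every $k$, because the sum of the $k$ largest entries of $\mathbf{r}^\ast$ dominates the sum of its first $k$ entries, which in turn dominates $\mysum{k}{\low{\mathbf{r}}}$ thanks to $\mathbf{r}^\ast \ge \low{\mathbf{r}}$ componentwise.

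Combining these facts gives, for each $k$ with $1 \le k \le n_1$, the inequality $m^\ast \ge \mysum{k}{\low{\mathbf{r}}} + \sum_j (c_j^\ast - k)^+$. Since $c_j^\ast \ge \low{c}_j$ forces $(c_j^\ast - k)^+ \ge (\low{c}_j - k)^+$, and since the same conjugate identity gives $\sum_j (\low{c}_j - k)^+ = \mysum{n_2}{\low{\mathbf{c}}} - \mysum{k}{\low{\mathbf{c}}'}$, I obtain $m^\ast \ge \mysum{n_2}{\low{\mathbf{c}}} + \bigl(\mysum{k}{\low{\mathbf{r}}} - \mysum{k}{\low{\mathbf{c}}'}\bigr)$. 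Maximizing the right-hand side over $k$ produces exactly $\mysum{n_2}{\low{\mathbf{c}}} + \delta_1$, which matches the edge count of the algorithm's output and establishes edge-minimality.

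I expect the main obstacle to be the bookkeeping around the two re-indexings. One must verify that replacing $\mathbf{r}^\ast$ by its non-increasing rearrangement is legitimate before invoking Theorem~\ref{Theo:GaleRyser}, and, more delicately, that the domination $\mysum{k}{\mathbf{r}^{\ast\downarrow}} \ge \mysum{k}{\low{\mathbf{r}}}$ survives the sorting, which it does because passing to the top-$k$ entries only increases the partial sum. The conjugate identity also deserves a careful statement, since it implicitly uses that a realizable column vector satisfies $c_j \le n_1$, so truncating the conjugate at index $n_1$ loses nothing. Each individual step is elementary; the care lies in assembling them so that the bound tightens to precisely $\delta_1$.
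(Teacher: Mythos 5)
Your proposal is correct and follows essentially the same route as the paper: both establish the universal lower bound $|E| \ge \mysum{n_2}{\low{\mathbf{c}}} + \delta_1$ from the domination condition $\low{\mathbf{r}} \trianglelefteq \mathbf{c}'$ and then match it against the algorithm's output, which has exactly that many edges. Your use of the identity $\mysum{k}{\mathbf{c}'} = \sum_j \min(c_j,k)$ in fact makes rigorous a step the paper leaves informal (that each edge beyond the $\mysum{n_2}{\low{\mathbf{c}}}$ mandatory ones can raise each partial conjugate sum by at most one), so your write-up is, if anything, the more careful of the two.
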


\begin{proof}
	Let~$(\mathbf{r}, \mathbf{c})$ be a bi-graphical pair of integer vectors associated to an arbitrary realization~$G=(U,V,E)$ of the four-tuple~$(\low{\mathbf{r}}, \up{\mathbf{r}}, \low{\mathbf{c}}, \up{\mathbf{c}})$.
	As~$\low{\mathbf{c}} \leq \mathbf{c}$ must hold, the number of edges~$|E|$ is bounded from below by~$|E| \geq \mysum{n_2}{\low{\mathbf{c}}}$.
	In addition, if~$(\mathbf{r}, \mathbf{c})$ is bi-graphical, the four-tuple~$(\low{\mathbf{r}}, \up{\mathbf{r}}, \mathbf{c}, \mathbf{c})$ must be realizable.
	Hence, the inequalities~$\mysum{j}{\low{\mathbf{r}}} \leq \mysum{j}{\mathbf{c}'} = \mysum{j}{\low{\mathbf{c}}'} + x_j$ must hold for each~$j$ in range~$1 \leq j \leq n_1$. Thus,~$x_j = \mysum{j}{\low{\mathbf{r}}} - \mysum{j}{\low{\mathbf{c}}'}$ is the minimal number of edges that~$G$ needs to possess \emph{in addition to} the~$\mysum{n_2}{\low{\mathbf{c}}}$ edges, so that the inequality~$\mysum{j}{\low{\mathbf{r}}} \leq \mysum{j}{\mathbf{c}'}$ can hold.
	Hence, the total number of edges is bounded from below by~$|E| \geq \mysum{n_2}{\low{\mathbf{c}}} + \max \{ x_j \colon 1 \leq j \leq n_2 \} = \mysum{n_2}{\low{\mathbf{c}}} + \delta_1$.
	Since our algorithm produces a bipartite graph~$G=(U,V,E)$ with exactly~$|E| = \mysum{n_2}{\low{\mathbf{c}}} + \delta_1$ edges,~$G$ is edge-minimal.
\end{proof}
%
%

\paragraph{Remark}

Our algorithm can easily be used to construct \emph{edge-maximal} realizations.
For this purpose, consider an arbitrary edge-minimal realization~$G=(U,V,E)$ of the four-tuple~$(\low{\mathbf{r}}, \up{\mathbf{r}}, \low{\mathbf{c}}, \up{\mathbf{c}})$.
The associated \emph{complement graph}~$G^*=(U,V,E^*)$ is defined by~$E^* = (U \times V) \setminus E$. 
By construction, the graph~$G$ is edge-minimal if and only if~$G^*$ is edge-maximal.
In addition, it follows from definition that the degrees of $G^*$ are bounded from below and above by the \emph{complementary four-tuple}~$(\low{\mathbf{r}}^*, \up{\mathbf{r}}^*, \low{\mathbf{c}}^*, \up{\mathbf{c}}^*)$ with 
\begin{align*}
\low{r}^* &= (n_2 - \up{r}_1, n_2 - \up{r}_2, \ldots, n_2 - \up{r}_{n_1})\\
\up{r}^* &= (n_2 - \low{r}_1, n_2 - \low{r}_2, \ldots, n_2 - \low{r}_{n_1})\\
\low{c}^* &= (n_1 - \up{c}_1, n_1 - \up{c}_2, \ldots, n_1 - \up{c}_{n_2})\\
\up{c}^* &= (n_1 - \low{c}_1, n_1 - \low{c}_2, \ldots, n_1 - \low{c}_{n_2}).
\end{align*}
Thus, we can find an edge-maximal realization by first determining an edge-minimal realization of the complementary four-tuple~$(\low{\mathbf{r}}^*, \up{\mathbf{r}}^*, \low{\mathbf{c}}^*, \up{\mathbf{c}}^*)$ and creating the associated complement graph.

\section{Conclusion}

We gave a description of an algorithm that constructs a bipartite graph~$G=(U,V,E)$ whose degrees lie in prescribed intervals and showed that this algorithm has a running time of~$ \mathcal{O}(|U| + |V| + |E|)$. Since this is asymptotically optimal, the algorithm can be used to efficiently solve the realization problem.

\section*{Acknowledgements}

We thank Matthias M\"uller-Hannemann and Annabell Berger for their valuable suggestions and detailed remarks.
This research did not receive any specific grant from funding agencies in the public, commercial, or not-for-profit sectors.

\bibliographystyle{plain}

\end{document}